\let\NAT@parse\undefined
\newcommand{\squishlist}{
 \begin{list}{∙\bullet}
  { \setlength{\itemsep}{1pt}
     \setlength{\parsep}{1pt}
     \setlength{\topsep}{2pt}
     \setlength{\partopsep}{1pt}
     \setlength{\leftmargin}{1.5em}
     \setlength{\labelwidth}{1em}
     \setlength{\labelsep}{0.5em} } }
\newcommand{\squishlisttwo}{
 \begin{list}{∙\bullet}
  { \setlength{\itemsep}{0pt}
     \setlength{\parsep}{1pt}
    \setlength{\topsep}{1pt}
    \setlength{\partopsep}{0pt}
\setlength{\leftmargin}{pt} } }
\newcommand{\squishend}{
\end{list}  }
\newcolumntype{N}{@{}m{0pt}@{}}
\newcommand{\multiline}[1]{%
  \begin{tabularx}{\dimexpr\linewidth-\ALG@thistlm}[t]{@{}X@{}}
    #1
  \end{tabularx}
}
\newcommand{\hide}[1]{}
\let\oldReturn\Return
\renewcommand{\Return}{\State\oldReturn}
\setlist[itemize]{leftmargin=10pt}
\theoremstyle{plain}
\newtheorem{lemma}{Lemma}
\newtheorem{problem}{Problem}
\DeclareMathOperator*{\argmax}{argmax}
\theoremstyle{definition}
\newcommand{\genius}{\textsc{Genius}}
\def\BibTeX{{\rm B\kern-.05em{\sc i\kern-.025em b}\kern-.08em
    T\kern-.1667em\lower.7ex\hbox{E}\kern-.125emX}}
\begin{document}
\title{Conference Paper Title*\\
{\footnotesize \textsuperscript{*}Note: Sub-titles are not captured in Xplore and
should not be used}
\thanks{Identify applicable funding agency here. If none, delete this.}
}

\title{\genius: A Novel Solution for Subteam Replacement with Clustering-based Graph Neural Network\\
}

\author{\IEEEauthorblockN{Chuxuan Hu,
Qinghai Zhou, Hanghang Tong}
\IEEEauthorblockA{University of Illinois at Urbana-Champaign \\
\{chuxuan3, qinghai2, htong\}@illinois.edu}}

\maketitle
\begin{abstract}

\textit{Subteam replacement} is defined as finding the optimal candidate set of people who can best function as an unavailable subset of members (i.e., subteam) for certain reasons (e.g., conflicts of interests, employee churn), given a team of people embedded in a social network working on the same task. 
Prior investigations on this problem incorporate graph kernel as the optimal criteria for measuring the similarity between the new optimized team and the original team. However, the increasingly abundant social networks reveal fundamental limitations of existing methods, including (1) the graph kernel-based approaches are powerless to capture the key intrinsic correlations among node features, (2) they generally search over the entire network for every member to be replaced, making it extremely inefficient as the network grows, and (3) the requirement of equal-sized replacement for the unavailable subteam can be inapplicable due to limited hiring budget. In this work, we address the limitations in the state-of-the-art for \emph{subteam replacement} by (1) proposing \textsc{Genius}, a novel clustering-based graph neural network (GNN) framework that can capture team network knowledge for flexible \emph{subteam replacement}, and (2) equipping the proposed \textsc{Genius} with self-supervised \textit{positive team contrasting} training scheme to improve the team-level representation learning and unsupervised node clusters to prune candidates for fast computation. Through extensive empirical evaluations, we demonstrate the efficacy of the proposed method (1) 
\emph{effectiveness}: being able to select better candidate members that significantly increase the similarity between the optimized and original teams, and (2) \emph{efficiency}: achieving more than $\times 600$ speed-up in average running time.


\end{abstract}

\section{Introduction}\label{sec:intro}
The emergence of network science of teams has revolutionized the way to characterize, predict and optimize teams that are embedded in large-scale social networks. Among others, given a team of people working on the same task (e.g., launching a targeted web service, proceeding towards a specific research topic), \textit{subteam replacement} is dedicated to finding the optimal candidate set of people who can best perform the function of a subset of the original team (i.e., subteam), if the subteam becomes unavailable for certain reasons. Real-world applicable scenarios include (1) in a software development team, a subteam of employees might be assigned to other departments due to new business requirements, (2) in artistic groups like choirs or dance troupes, a subgroup of artists might be unable to show up in a play due to schedule and/or staging conflicts, and many more. In these situations, the goal of \emph{subteam replacement} is to address the potential deterioration of the team performance due to the absence of subteam members.

State-of-the-art adopts the following two principles in designing \emph{subteam replacement} algorithms. First (\textit{structural similarity}), to reduce the disruption to the current team, the recommended candidates should have a similar collaboration relationship as the current team members in terms of the connections with the remaining team members. Second (\textit{skill similarity}), the new members should possess a similar skill set as the members from the unavailable subteam, therefore they can best fulfill the requirements of the team-level task. Most, if not all, of the existing works leverage random walk graph kernel~\cite{vishwanathan2010graph} as the cornerstone for \textit{subteam replacement} to capture both the structural and skill match, where the key idea is to select the candidate members that can achieve the largest similarity between the newly constructed and the original teams~\cite{Liangyue2015, Zhaoheng2021}. 

Although these approaches are effective in recommending well-qualified candidates in \emph{subteam replacement}, there are three major limitations with the graph kernel-based approaches. First, from the perspective of effectiveness, in quantifying the similarity between teams, graph kernel-based methods separately exploit the attributes (i.e., skills), which inevitably ignores the potential correlations between various skills. For example, in research team replacement (e.g., on DBLP dataset), the researcher's skills correspond to the research areas in computer science (e.g., data mining, NLP, machine learning, etc.). Graph kernel-based methods will separately evaluate the closeness between two teams regarding an individual research area (i.e., skill), without considering the essential correlation among different areas (e.g., machine learning and data mining), which will sabotage the performance of graph kernel-based approaches. It is worth noting that we observe a drastic performance drop w.r.t. multiple graph similarity as the skill set scales up (e.g., keywords in publications) and we have a detailed discussion in Sec.~\ref{sec:method} and~\ref{sec:exp}. Second, graph kernel-based methods are computationally intractable in terms of time and space complexity since they iteratively calculate the similarity between the original team and the new team with candidate members from an enormous search space. Furthermore, existing \emph{subteam replacement} methods require the same size of candidates as the unavailable subteam. Nonetheless, in real-world scenarios, recommending a compact set of members that can undertake the same tasks would be more preferable due to limited hiring budget.



In this paper, we carefully address the aforementioned limitations and propose a novel graph neural network (GNN) framework based on clustering, namely \genius, to efficiently recommend well-qualified candidates for \emph{subteam replacement}. Specifically, given the team network embedded in a large social network, the proposed \textsc{Genius} framework first utilizes a GNN-backboned team representation learner to encode the knowledge of collaboration structure and skill in member-level embeddings, which will be employed for accurate \emph{subteam replacement}. To further improve the representation learning, we design a self-supervised \textit{positive team contrasting} 
scheme where the key idea is to enforce the similarity between the subteam representation and that of the corresponding team. From the perspective of optimization, we aim to minimize the disparity between the recommended subteam and the original team in both collaboration structure and skills. In addition, we further propose a novel clustering-based method, which (1) can speed up the \emph{subteam replacement} by pruning the unqualified candidates, 
and (2) allows flexible sizes of recommended subteam without harming its performance to meet the requirements of cases where smaller new teams are expected. 
To summarize, our main contributions are three-fold:

\begin{itemize}
    \item \textbf{Problem:} we formally define the \emph{subteam replacement} problem. Given a set of unavailable members (i.e., subteam), the key idea is to recommend flexible-sized team members, that can perform the same functionality as the original members.
    \item \textbf{Algorithms and Analysis:} 
    to our best knowledge, we are the first to introduce a trainable framework \genius, which integrates (1) a GNN-backboned \emph{team network encoder} equipped with a clustering layer, and (2) a \textit{subteam recommender} where a within-cluster search algorithm is performed. 
    In addition, we conduct a comprehensive analysis on the advantages of \textsc{Genius} in capturing the correlations among feature attributes over existing graph kernel-based approaches.
    
    \item \textbf{Empirical Evaluations:} We conduct extensive experiments, including quantitative evaluations w.r.t. multiple graph similarity metrics and case studies on real-world social networks, to demonstrate the superiority 
    of our model over the state of arts.
    
\end{itemize}

\section{Problem Definition}

In this section, we formally define the flexible-sized \textit{subteam replacement} problem, after introducing the notations used in this paper.

Table \ref{tab:symbols} summarizes main symbols and notations used in this paper. We use uppercase calligraphic letters for sets (e.g.,
$\mathcal{G}$), bold uppercase letters for matrices (e.g., $\mathbf{A}$), bold lowercase letters for vectors (e.g., $\mathbf{r}$) and lower case letter for numbers (e.g., $c$). For vector and matrix indexing, we use $\mathbf{A}[i,:]$ and $\mathbf{A}[:, j]$ to represent the $i$-th row and the $j$-th column of $\mathbf{A}$, respectively. Alternatively, we use $\mathbf{A}[\mathcal{T}, :]$ and $\mathbf{A}[:, \mathcal{T}]$ to denote the rows and columns indexed by the elements in set $\mathcal{T}$, respectively. 
\begin{table}[!ht]
	\centering
	  \caption{Symbols and Notations}
	{\begin{tabular}{lc}
    \toprule
    \textbf{Symbols} & \textbf{Definition}  \\
    \midrule
    $\mathbf{A} \in \mathbb{R}^{n \times n}$ & Adjacency matrix of the entire team social network\\
    $\mathbf{X} \in \mathbb{R}^{n \times d}$ & Node feature matrix of the entire team social network\\
    \midrule
    $\mathcal{G}$ & Entire Social Network\\
    $\mathcal{T} \subset \mathcal{G} $  & The original team\\
    $\mathcal{R} \subset \mathcal{T}$ & The subteam to be replaced\\
    $\mathcal{M} = \mathcal{G} \setminus \mathcal{T}$ & Remaining graph\\
    $\mathcal{S} \subset \mathcal{M}$ & The optimal new subteam\\
    \midrule
    $d, c$ & \#node features, \#clusters, respectively\\
    $n$ & Size of the entire team social network\\
    \midrule
    $\hat{\mathbf{A}}$ & row-wised normalized matrix\\
    $\mathbf{A}^T$ & transposed matrix\\
    \bottomrule
    \label{tab:symbols}
	\end{tabular}
	}
	\label{tab:data_stats}
\end{table}
In general, \emph{subteam replacement} aims to find a set of candidate members to replace the unavailable members in the original team, so that the new team can perform the same functionalities as the original one. In this paper, we denote the entire team social network, in which the teams are embedded, as  $\mathcal{G}=\{\mathbf{A}, \mathbf{X}\}$. $\mathbf{A}$ and $\mathbf{X}$ represent the adjacency and the node attribute matrix of the social network, respectively. The original team of a group of individual members is represented by $\mathcal{T}$ and the subteam that is going to be replaced is denoted as $\mathcal{R}$. Different from the existing works that find the same number of new members (i.e., $|\mathcal{S}|$)~\cite{Zhaoheng2021}, we investigate the flexible-sized replacement by recommending new team members $\mathcal{S}$ where $|\mathcal{S}|\leq|\mathcal{T}|$ from the remaining individuals of $\mathcal{G}$ (i.e.,  $\mathcal{M}=\mathcal{G}\setminus\mathcal{T}$), which is more applicable in real-world scenarios (e.g., limited hiring budget).



Therefore, we formally defined the flexible-sized \textit{subteam seplacement} problem as follows:
\begin{problem}{\textbf {Subteam Replacement}}\label{prob:def}
	\begin{description}
		\item[Given:] (1) a social network $\mathcal{G} = ( \mathbf{A}, \mathbf{X} )$;
		(2) an original team, $\mathcal{T} = \{ \mathbf{A}[\mathcal{T},\mathcal{T}], \mathbf{X}[\mathcal{T},:] \}$; (3) a subteam of individuals to be replaced $\mathcal{R} = \{ \mathbf{A}[\mathcal{R},\mathcal{R}], \mathbf{X}[\mathcal{R},:] \}$

		\item[Goal:] to learn a subteam replacement model, which is capable of capturing the team-level knowledge to replace the unavailable subteam, $\mathcal{R}$, by recommending a new set of candidate members, i.e., $\mathcal{S}$ of flexible size where $|\mathcal{S}|\leq|\mathcal{T}|$. Ideally, the new team, i.e., $\mathcal{T}\setminus\mathcal{R}\cup\mathcal{S}$ should be similar to the original team from the perspectives of collaboration structures and skills.
		
	\end{description}
\end{problem}

\section{Proposed Approach}\label{sec:method}

 \begin{figure*}[t!]
    \graphicspath{{figures/}}
    \centering
    \includegraphics[width=0.95\textwidth]{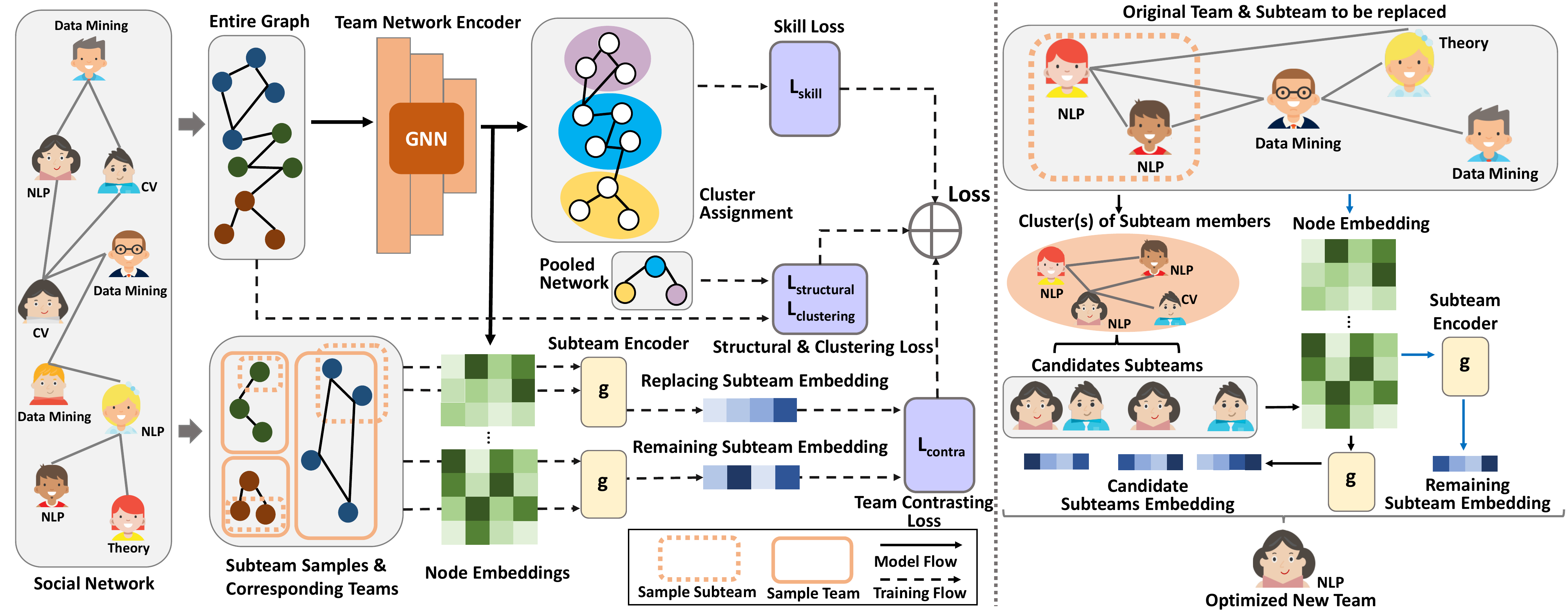}
    \caption{Overview of the \textsc{Genius} Framework. (Left) The data flow of \textit{team network encoder}
    and its training procedure. (Right) Within-cluster search performed by \textit{subteam recommender}.}
    \vspace{-6mm}  
    \label{fig:framework}%
\end{figure*}
In this section, we introduce the details of our proposed framework, \textsc{Genius}, for flexible-sized \emph{subteam replacement}. To be specific, \genius~ addresses the discussed limitations of existing methods by integrating a 
\underline{\textbf{G}}NN-backboned \underline{\textbf{EN}}coder for with\underline{\textbf{I}}n-cl\underline{\textbf{US}}ter subteam search. In Section~\ref{sec:genius_intro}, we first present the two major building blocks of \genius, including (1) \emph{team network encoder} which can capture the essential team social network knowledge to fulfill the requirements of structural and skill match (i.e., training), and (2) \emph{subteam recommender} for recommending candidates members to replace the unavailable subteam members (i.e., inference). We then analyze the superiority of \genius~in obtaining the key correlations among feature attributes over graph kernel-based approaches from the perspectives of \emph{effectiveness} and \emph{efficiency} in section ~\ref{sec:analysis}. An overview of the proposed \textsc{Genius} is provided in Figure~\ref{fig:framework}. 

\vspace{-2mm}
\subsection{\textbf{The \textsc{Genius} Framework}}\label{sec:genius_intro}
\vspace{-1mm}
In \emph{subteam replacement}, we aim to learn the team social network knowledge regarding the collaboration structure (i.e., topology) and skills (i.e., attributes) to recommend a set of well-qualified candidates that can match the characteristics of the original team. To achieve this, we propose a GNN-backboned architecture which is composed of the following components: (1) a \emph{team network encoder}, as the team-level representation learner, and (2) a \emph{subteam recommender}, which is a novel within-cluster searching module for candidates recommendation. We present the details as follows.

\noindent\textbf{Team Network Encoder.} The general idea of \textit{subteam replacement} is to select the candidate members that possess similar characteristics as the original team members in terms of collaborations and skills. Therefore, the replacement performance is considerably dependent on the informativeness of the learned representations of the network. We propose a \emph{team network encoder} which is composed of two sub-components, including (1) a GNN-based team representation learner to capture the correlations between subteams and their corresponding original teams, and (2) a clustering layer
to generate node cluster assignments for efficient member recommendation.




To construct a high-quality \textit{team network encoder}, specifically, we exploit GNNs~\cite{MessagePassing, GCN2016} to map each node in the team social network to a low-dimensional latent space. GNNs are neural network architectures on networked data and can obtain the topological and attribute information of local graph structures through a neighborhood message-passing mechanism. For each intermediate layer, we can compute the node representations as follows:

\begin{equation}
\begin{aligned}
    \mathbf{h}_{i}^l &= \mathrm{F}^l\Big(\{   \mathbf{h}_j^{l-1} \arrowvert \forall j \in \mathcal{N}_i \cup \{v_i\} \}\Big),
    \label{eqn:GNN_update}
\end{aligned}
\end{equation}
\noindent where $\mathbf{h}_i^l$ is the embedding of node $v_i$ at the $l$-th layer and $\mathcal{N}_i$ is the set of one-hop neighbors of node $v_i$. $\mathrm{F}^l(\cdot)$ is a nonlinear function that (1) aggregates the information of neighboring nodes from the previous layer, and (2) updates the representation of a node based on its own embeddings from the previous layer and the combined representation from the neighbors.
 
In \emph{team network encoder}, we stack multiple GNN layers in order to extract the long-range node dependencies in the network, which can be represented as follows:
\begin{equation}
\begin{aligned}
    &\mathbf{H}^{1} = \mathrm{GNN}^1 (\mathbf{A}, \mathbf{X}),\\
    &\dots \\
    &\mathbf{H}^{L} = \mathrm{GNN}^L (\mathbf{A}, \mathbf{H}^{L-1}),
\end{aligned}
\end{equation}
\noindent where $\mathbf{H}^L$ denotes the node embedding matrix at the $L$-th GNN layer. To alleviate the over-smoothing issue in GNNs (i.e., all nodes have similar embeddings) ~\cite{gcn_smoothing}, 
we obtain the final node representations by concatenating the intermediate embeddings from each GNN layer~\cite{JKNet}:
\begin{equation}
      \mathbf{Z} = \mathrm{concat}( \mathbf{H}^{(1)},...,\mathbf{H}^{(L)}),
\end{equation}
where $\mathbf{Z}\in\mathbb{R}^{n\times d}$ denotes learned node embeddings from the \emph{team network encoder}, which encodes the topological and attribute information of the team social network. In \emph{subteam replacement}, the two fundamental principles require that the recommended candidates have (1) a close collaboration relationship (i.e., structural match), and (2) similar attributes (i.e., skill match), as the original team members. Through the proposed GNN-backboned encoder, we can quantitatively evaluate whether the candidate team members are well-qualified in the embedding space by assessing the similarity between the corresponding vector representations. It is worth noting that the \textit{team network encoder} is compatible with any GNN-based architecture~\cite{GCN2016,GraphSAGE,GAT,GPRGNN}, and here we apply Graph Convolutional Networks (GCNs)~\cite{GCN2016} in our implementation.


Furthermore, motivated by the principles in \emph{subteam replacement}, we speculate that the acceptable candidates and the existing team members should constitute a group (i.e., cluster) with similar representations in the embedding space. Hence, we exploit a clustering layer to group the members in the social network according to the obtained representations. Specifically, for a node $v_i$, the clustering layer
first performs a nonlinear transformation on the embedding matrix (i.e., $\mathbf{Z}$) as 
\begin{equation}
\mathbf{E} = \sigma(\mathbf{Z}\mathbf{W}_c),
\end{equation} 
where $\mathbf{W}_c\in\mathbb{R}^{d\times c}$ is the learnable weight matrix and $c$ is the number of clusters. $\sigma$ is the ReLU activation function. The transformed matrix will be used to generate a (soft) cluster assignment matrix $\mathbf{C}\in\mathbb{R}^{n\times c}$ where the elements of the corresponding row $i$ are computed as follows:
\begin{equation}
     \mathbf{C}[i,m] = \frac{e^{\mathbf{E}[i,m]}}{\sum_{j=1}^ce^{\mathbf{E}[i,j]}},
\end{equation}

\noindent where ${m\in\{1,\dots,c\}},\mathbf{C}[i,:]$ is the cluster assignment vector for node $v_i$.



Having obtained the cluster assignment matrix, we further compute the (hard) assignment vector $\mathbf{h} \in \mathbb{R}^{n}$ where
\begin{equation}
     \mathbf{h}[i] = \argmax_{j\in\{1,\dots,c\}}(\mathbf{C}[i,j]),
\end{equation}
We can construct the cluster container $\mathcal{K}$ as:
\begin{align*}
    \mathcal{K}[m] = \{v_i|\mathbf{h}[i] = m, i=\{1,\dots,n\}\}, m\in\{1,\dots,c\},
\end{align*}
\noindent which will be further utilized for recommending candidates.

The clustering layer bears two main advantages. First, it enforces that the nodes with similar embeddings will be grouped together, which fulfills the requirements of structural and skill match in \emph{subteam replacement}. Second, compared to existing methods which adopt an inefficient candidate search over the entire network, the clustering results can be leveraged to significantly speed up the replacement because it largely reduces the search space by pruning the unqualified members.

\smallskip
\noindent\textbf{Training.} The objective of our proposed \genius~ framework is to recommend well-qualified candidate members that are close to the members of existing teams, based on the learned node representations. To navigate the model training, we first propose a self-supervised contrasting scheme, namely \textit{positive team contrasting}, to enhance the node representation learning. In addition, we introduce the training objectives for satisfying both structural and skill match in \emph{subteam replacement}.


\noindent \emph{Positive team contrasting.} Generally speaking, contrastive learning aims to empower the representation learning by maximizing the agreement between similar examples in each pair~\cite{simclr}. In \emph{subteam replacement} problem, we are only provided with the network information from (1) the original teams, and (2) the social network where the teams are embedded. Intuitively, a subteam as well as its replacement should be similar to the original team in terms of collaboration structure and skills because they work on the same and well-specified task. To address this, we propose \emph{positive team contrasting} to improve team-level representation learning through self-contrasting. The key idea is to minimize the disparity between the embedding of the recommended subteam (i.e., $\mathcal{R}$) and that of the remaining team (i.e., $\mathcal{T}\setminus\mathcal{S}$).

We define the team-level representation (e.g., $\mathcal{R}$) as the weighted aggregation of the members' embeddings:
\begin{equation}
g(\mathcal{R}) = \sum_{t \in \mathcal{R}} w_t\mathbf{Z}[t,:],\qquad \sum_{t \in \mathcal{R}} w_t = 1,
\label{eqn:team _embedding}
\end{equation}
\noindent where $w_t$ denotes the weights for aggregation. In this paper, we set the weight as $w_t=\frac{1}{|\mathcal{R}|}$.

Having obtained the team embeddings from Eq.~\eqref{eqn:team _embedding}, we can now define the \emph{team contrasting loss} as follows:
\begin{equation}
 L_{\mathrm{contra}} = \frac 1 s\sum_{i=1}^sf(g(\mathcal{R}_i), g(\mathcal{T}_i \setminus \mathcal{R}_i )),
\end{equation} 
\noindent where $s$ is the number of sample teams, $\mathcal{R}_i$ denotes the subteam to be replaced from the $i$-th team (i.e., $\mathcal{T}_i$) from the network $\mathcal{G}$. $f(\cdot,\cdot)$ computes the cosine similarity between the two vectors. $L_\mathrm{contra}$ calculates the average similarity between the replaced teams and the remaining teams after members' departure. By minimizing the self-supervised contrastive loss, we enforce the matching among team-level embeddings regarding collaboration structure and skills.

According to the two fundamental principles in designing \emph{subteam replacement} algorithms, i.e., structural and skill match, we further define the following objectives, including (1) \textit{skill loss} for skill similarity between nodes, (2) \textit{structural loss} to evaluate the structural differences between the clustering results and original adjacency matrix, and (3) \textit{clustering loss} to measure the quality of the clustering layer. We detailed the training objective as follows.

\noindent\textit{Skill loss.} From the perspective of skills, the nodes from the same group after the clustering layer are supposed to be similar, therefore, we design the \textit{skill loss} to train the model towards this objective. We first apply a pairwise similarity measure to the attribute matrix (i.e., $\mathbf{X}$) and the cluster assignment matrix (i.e., $\mathbf{C}$) as follows:
\begin{equation}
    \begin{aligned}
     \mathbf{Y}_1 = \mathrm{PairSim}(\mathbf{X},\mathbf{X}), \mathbf{Y}_2 = \mathrm{PairSim}(\mathbf{C},\mathbf{C}),
    \end{aligned}
    \label{eqn:loss_skill_1}
    \end{equation} 
\noindent where $\mathrm{PairSim}(\mathbf{P, Q})=\hat{\mathbf{P}}\hat{\mathbf{Q}}^T$ for $\mathbf{P}, \mathbf{Q}\in\mathbb{R}^{n\times d}$ and $\hat{\mathbf{P}}$, $\hat{\mathbf{Q}}$ denotes the row-normalized $\mathbf{P}$, ${\mathbf{Q}}$, respectively. The matrices, $\mathbf{Y}_1$ and $\mathbf{Y}_2$, represent the similarity between all pairs of nodes regarding node attributes and the cluster assignment results, respectively. We then compare the two matrices from Eq.~\eqref{eqn:loss_skill_1} using PairSim:
\begin{equation}
    \mathbf{Y}_3 = \mathrm{PairSim}(\mathbf{Y}_1,\mathbf{Y}_2),
\end{equation}
\noindent $\mathbf{Y}_3$ encodes the closeness between the two similarity measures on attributes and cluster assignment. Therefore, by aggregating the diagonal elements in $\mathbf{Y}_3$, we can obtain the skill similarity loss from the clustering results as follows,
\begin{equation}
    L_{\mathrm{skill}} = -\mathrm{tr}(\mathbf{Y}_3), 
\end{equation}
\noindent where $\mathrm{tr}(\cdot)$ denotes the trace of the input matrix. Through minimizing $L_\mathrm{skill}$, the clustering layer is optimized to assign nodes with similar skills to the same cluster.

\noindent \textit{Structural loss.} Ideally, nodes that are topologically adjacent are supposed to be clustered together. Inspired by previous studies on supervised node clustering\cite{DiffPool2018}, we exploit the \emph{structure loss} $L_\mathrm{link}$ to encode structural similarity for the cluster assignment results. Specifically, we define the objective as follows:
\begin{equation}
 L_{\mathrm{structural}} = \| \mathbf{A} - \mathbf{C}\mathbf{C}^T\|_F,
\end{equation}
where $\| \cdot \|_F$ denotes the Frobenius norm. 

\noindent \emph{Clustering loss.} Regarding the cluster assignment results, a cluster assignment vector with even values (i.e., close to uniform distribution) represents the assignment of the corresponding node is of high uncertainty. Hence, it is more preferable that the node cluster assignment vector is close to a one-hot vector. We now define the \textit{clustering loss} as:
\begin{equation}
 L_{\mathrm{clustering}} = \frac 1 n \sum_{i=1}^n \mathrm{Entropy}(\mathbf{C}[i,:]),
\end{equation}
where $\mathrm{Entropy}$ denotes the entropy function, which is calculated as 
\begin{equation}
 \mathrm{Entropy}(\mathbf{C}[i,:]) = -\sum_{j=1}^c \mathbf{C}[i,j] \log \mathbf{C}[i,j],
\end{equation}
We combine the aforementioned objectives as follows:
\begin{equation}
    L = L_\mathrm{contra} + b_1 L_\mathrm{skill} + b_2 L_\mathrm{structural} + b_3 L_\mathrm{clustering},
\label{eqn:total_loss}
\end{equation}
where $L$ represents the total loss and $b_i, i \in \{ 1, 2, 3 \}$ are the regularization parameters that balance the scales of training objectives for different datasets. $L$ is applied to guide the training of the entire GNN (i.e., \textit{team network encoder}).

In this way, we have developed a self-supervised model \emph{team network encoder} trained by \emph{positive team contrasting}, a training scheme specially designed for team-related problems. Compared to models trained by node or subgraph (team) labels to learn team representations in supervised ways ~\cite{DBLP}, \emph{team network encoder} has two advantages:
\begin{itemize}
    \item Labels have limited expressions in \emph{subteam replacement} because they do not contain critical team information. \emph{Team network encoder} instead utilizes existing team samples to encode the knowledge of collaboration structure and skill in member- and team-level representations.
    \item \emph{Positive team contrasting} utilizes data more efficiently. Unlike supervised training schema where datasets are clearly divided into training sets and testing sets with the former being much less convincing for validations, the training team samples for \emph{team network encoder} can still serve as valid testing cases with different subteams to be replaced. 
\end{itemize}
\noindent\textbf{Subteam Recommender.}
Having the proposed \emph{team network encoder} model, we are equipped with effective representations of teams and their members. 
Given an original team $\mathcal{T}$ and the subteam to be replaced $\mathcal{R}\subset\mathcal{T}$, we propose a \emph{subteam recommender} that performs an efficient within-cluster search based on subteam representation (Algorithm \ref{alg:inference}) where $g$ is the predefined function (Eq.~\eqref{eqn:team _embedding}) for calculating subteam embeddings, $f$ computes the cosine similarity. $\mathrm{Set}$ is a deduplication function for containers and $\mathrm{Product}$ generates Cartesian products on the assigned clusters.

\textit{Subteam recommender} generates candidates only from clusters where subteam members to be replaced are assigned to. Compared to generating candidates across the entire dataset, this pruning technique significantly improves the efficiency of the algorithm (Lemma 1). Additionally, the proposed \genius~framework can guarantee that the optimal solution is included in the candidates since nodes with similar embeddings will be clustered together and thus, it is effective to prune unqualified candidates through clustering.
\begin{lemma}
Within-cluster search is $O(c^{\lvert \mathcal{R} \rvert})$ more efficient than searching over the entire dataset.
\end{lemma}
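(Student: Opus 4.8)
The plan is to establish the claimed speed-up by counting the number of candidate subteams that each search strategy must enumerate, since the dominant cost of the algorithm is evaluating the similarity $f(g(\cdot),\cdot)$ over all candidate replacements. First I would set up the baseline: a naive search over the entire network must consider every size-$|\mathcal{S}|$ subset of the remaining graph $\mathcal{M}$ as a potential replacement for $\mathcal{R}$. For the equal-sized case $|\mathcal{S}|=|\mathcal{R}|$, this amounts to enumerating ordered tuples of length $|\mathcal{R}|$ drawn from the $|\mathcal{M}|$ remaining nodes, giving on the order of $|\mathcal{M}|^{|\mathcal{R}|}$ combinations (formally via the Cartesian product that $\mathrm{Product}$ generates). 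This is the quantity against which the within-cluster cost is compared.

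Next I would analyze the within-cluster search. The key structural fact, supplied by the clustering layer, is that \textit{subteam recommender} restricts each member $r\in\mathcal{R}$ to candidates drawn only from the single cluster to which $r$ is assigned under the hard assignment $\mathbf{h}$. If the $c$ clusters partition the network into groups of roughly equal size, then each cluster contains about $|\mathcal{M}|/c$ eligible candidates rather than $|\mathcal{M}|$. Taking the Cartesian product across the $|\mathcal{R}|$ members then yields on the order of $(|\mathcal{M}|/c)^{|\mathcal{R}|}$ candidate subteams, since each of the $|\mathcal{R}|$ positions independently draws from a cluster of size $|\mathcal{M}|/c$.

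Forming the ratio of the two counts gives the result directly:
\begin{equation}
\frac{|\mathcal{M}|^{|\mathcal{R}|}}{(|\mathcal{M}|/c)^{|\mathcal{R}|}} = c^{|\mathcal{R}|},
\end{equation}
which is precisely the claimed factor $O(c^{|\mathcal{R}|})$. I would then remark that this assumes the per-candidate evaluation cost is identical under both strategies (so the speed-up is entirely attributable to the reduced search space) and that the set-deduplication step $\mathrm{Set}$ does not change the asymptotic count.

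The main obstacle I anticipate is the uniformity assumption on cluster sizes: the bound $(|\mathcal{M}|/c)^{|\mathcal{R}|}$ is exact only when the clusters are balanced, whereas in practice the learned assignment may produce skewed clusters, and a subteam whose members all fall in one large cluster would see a smaller speed-up. I would address this by stating the balanced-cluster assumption explicitly (or by interpreting $|\mathcal{M}|/c$ as the average cluster size), so that the $c^{|\mathcal{R}|}$ factor holds in expectation. A secondary subtlety worth flagging is that $|\mathcal{S}|$ may be smaller than $|\mathcal{R}|$ in the flexible-sized setting; the same counting argument applies with exponent $|\mathcal{S}|$, but since the lemma is stated in terms of $|\mathcal{R}|$ I would note that the members to be replaced (hence the clusters searched) are indexed by $\mathcal{R}$, keeping the exponent consistent with the statement.
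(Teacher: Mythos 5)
Your proposal is correct and follows essentially the same argument as the paper: both count $O(n^{\lvert\mathcal{R}\rvert})$ candidate tuples for the full search versus $O((n/c)^{\lvert\mathcal{R}\rvert})$ for the within-cluster search under the average-cluster-size assumption, and take the ratio to obtain the $c^{\lvert\mathcal{R}\rvert}$ factor. Your explicit caveats about cluster-size skew and the deduplication step are reasonable additions but do not change the substance of the argument.
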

\begin{proof}
With the calculation of each iteration viewed as a constant, searching over the entire dataset has a time complexity of $O(n^{\lvert \mathcal{R} \rvert})$ because there are $n^{\lvert \mathcal{R} \rvert}$ combinations of new subteams (repetitions included) generated from the entire dataset. On average there are $n/c$ nodes within a cluster and thus, the time complexity for within-cluster search is in $O((n/c)^{\lvert \mathcal{R} \rvert})$. Within-cluster search has time complexity $O(1 /{c^{\lvert \mathcal{R} \rvert}})$ of that of searching over the entire dataset and thus is $O(c^{\lvert \mathcal{R} \rvert})$ more efficient.
\end{proof}
\begin{lemma}
The search algorithm is guaranteed to produce $\mathcal{S}$ where $\lvert \mathcal{S} \rvert \leq \lvert \mathcal{R} \rvert$.
\end{lemma}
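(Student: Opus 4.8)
The plan is to establish the size bound by a direct counting argument on the number of distinct clusters to which the members of $\mathcal{R}$ are assigned. First I would fix notation following the hard cluster assignment: each member $v_i \in \mathcal{R}$ is sent to exactly one cluster index $\mathbf{h}[i] \in \{1,\dots,c\}$ by the clustering layer. Collecting these indices and applying the deduplication function $\mathrm{Set}$ yields the set of occupied clusters $\mathcal{C}_{\mathcal{R}} = \mathrm{Set}(\{\mathbf{h}[i] \mid v_i \in \mathcal{R}\})$, whose cardinality I denote $k = \lvert \mathcal{C}_{\mathcal{R}} \rvert$.

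The central observation is that $\mathbf{h}$ restricted to $\mathcal{R}$ is a function from a set of size $\lvert \mathcal{R} \rvert$ into $\{1,\dots,c\}$, so its image, which is exactly $\mathcal{C}_{\mathcal{R}}$, contains at most $\lvert \mathcal{R} \rvert$ elements; hence $k \leq \lvert \mathcal{R} \rvert$, with equality precisely when every member of $\mathcal{R}$ lands in a distinct cluster. This is where the flexibility in output size originates: whenever two replaced members share a cluster, $k$ drops strictly below $\lvert \mathcal{R} \rvert$.

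Next I would connect $k$ to the output size. By the stated semantics of the recommender, $\mathrm{Product}$ forms the Cartesian product over the $k$ occupied clusters in $\mathcal{C}_{\mathcal{R}}$, so each candidate tuple enumerated by the within-cluster search contains exactly one representative per occupied cluster and therefore has length $k$. The final recommendation $\mathcal{S}$ is one such tuple (the one maximizing the subteam similarity $f(g(\cdot),\cdot)$), so $\lvert \mathcal{S} \rvert = k \leq \lvert \mathcal{R} \rvert$, which is the claim.

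The main obstacle is not the counting itself but pinning down the algorithmic semantics precisely: I must argue that $\mathrm{Set}$ collapses repeated cluster indices before the product is taken, so that $\mathrm{Product}$ ranges over distinct clusters rather than over all $\lvert \mathcal{R} \rvert$ member slots, and that no later step of Algorithm~\ref{alg:inference} can enlarge the chosen tuple. Grounding these two facts in the definitions of $\mathrm{Set}$ and $\mathrm{Product}$ given in the text closes the argument.
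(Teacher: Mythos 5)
Your counting argument lands on the right bound, but it rests on a reading of Algorithm~\ref{alg:inference} that the algorithm does not support. You assume that $\mathrm{Set}$ collapses the repeated cluster indices of $\mathcal{R}$ \emph{before} the Cartesian product is taken, so that $\mathrm{Product}$ ranges over the $k$ distinct occupied clusters and every candidate tuple has length $k$. In the algorithm as specified, the product is taken over $\mathcal{K}[\mathbf{c}]$ with $\mathbf{c}=\mathbf{h}[\mathcal{R}]$ left as a length-$\lvert\mathcal{R}\rvert$ list (repeated cluster indices and all), so every tuple $\mathcal{D}$ has length exactly $\lvert\mathcal{R}\rvert$; the deduplication $\mathrm{Set}$ is applied only in step~6--7, to the \emph{member} tuple $\mathcal{E}=\mathcal{D}\setminus\mathcal{T}$, not to the cluster indices. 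Consequently your identity $\lvert\mathcal{S}\rvert=k$ is false for the actual algorithm: when two replaced members share a cluster, the search can still draw two \emph{different} members from that cluster, producing a candidate of size strictly larger than $k$ (though never larger than $\lvert\mathcal{R}\rvert$). Your reading would also silently shrink the candidate space, which matters beyond this lemma.

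The fix is short and is exactly the paper's argument: every $\mathcal{D}$ satisfies $\lvert\mathcal{D}\rvert=\lvert\mathcal{R}\rvert$ by construction of the product; removing members of $\mathcal{T}$ and deduplicating repeated members can only shrink the tuple, so $\lvert\mathcal{E}\rvert\leq\lvert\mathcal{D}\rvert=\lvert\mathcal{R}\rvert$; and $\mathcal{S}$ is chosen among the $\mathcal{E}$'s, giving $\lvert\mathcal{S}\rvert\leq\lvert\mathcal{R}\rvert$. Your observation that the image of $\mathbf{h}$ restricted to $\mathcal{R}$ has at most $\lvert\mathcal{R}\rvert$ elements is true but is not the mechanism by which the flexible size arises here; the slack comes from per-candidate member deduplication, not from pre-product cluster deduplication.
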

\begin{proof}
$\mathrm{Product}$ performs cartesian product on lists, generating products of size $\lvert \mathcal{R} \rvert$ (i.e.,
$
    \lvert \mathcal{D} \rvert = \lvert \mathcal{R} \rvert
$).
If several subteam members belong to the same cluster, some products contain repeating elements. In step $6$, repeating elements will be eliminated by the deduplication operation $\mathrm{Set}$ (i.e.,
$
    \lvert \mathcal{E} \rvert \leq \lvert \mathcal{D} \rvert = \lvert \mathcal{R} \rvert
$).
Since $\mathcal{S}$ is generated from the set of all $\mathcal{E}$'s, $\lvert \mathcal{S} \rvert \leq \lvert \mathcal{R} \rvert$ holds.
\end{proof}
It is also worth stressing that with the entire dataset fed into \emph{team network encoder} once, all teams and their subteams to be replaced can be directly passed into \emph{subteam recommender}, making the amortized complexity of \textsc{Genius} very low.
\begin{algorithm}[!t]
\caption{Subteam Recommender}\label{alg:inference}
\begin{algorithmic}[1]
\Require (1) the original team $\mathcal{T}$; (2) a subteam to be replaced $\mathcal{R}$; (3) the cluster hard assignment vector $\mathbf{h}$; (4) node embedding matrix $\mathbf{Z}$; (5) remaining social network $\mathcal{M}$; (6) clusters $\mathcal{K}$
\Ensure The optimal subteam $\mathcal{S} \in \mathcal{M}$ where $\lvert \mathcal{S} \rvert \leq \lvert \mathcal{R} \rvert$
\State Initialize similarity $y$ = 0, $\mathcal{S}$ = \{\}
\State Pre-compute remaining subteam embedding $\mathbf{r}$ $\gets$ $g(\mathcal{T} \setminus \mathcal{R}$)
\State Compute clusters where $\mathcal{R}$ are assigned $\mathbf{c}$ $\gets$ $\mathbf{h}[\mathcal{R}]$
\State Generate candidate new subteams $\mathcal{Q}$  $\gets$ $\mathrm{Product}$($\mathcal{K}[\mathbf{c}]$)

\For{each candidate $\mathcal{D}$ \textbf{in} $\mathcal{Q}$}
\State Remove members in $\mathcal{T}$ from candidate $\mathcal{E}$ $\gets$ $\mathcal{D}$ $\setminus \mathcal{T}$
\State Deduplicate candidate $\mathcal{E}$ $\gets$ $\mathrm{Set}$($\mathcal{E}$)
\State Compute embedding for the new subteam $\mathbf{n}$ $\gets$  $g(\mathcal{E})$
\State Compute similarity $t$ $\gets$ $f(\mathbf{r}, \mathbf{n}$)
\If{$t$ \textgreater  $y$}
    \State Update $\mathcal{S}$ $\gets$ $\mathcal{D}$, $y \gets t$
\EndIf
\EndFor
\end{algorithmic}
\end{algorithm}
\subsection{\textbf{Analysis}}\label{sec:analysis}
\vspace{-1mm}
Empirically, we observe that as the number of node features increases, graph kernel-based methods fail to generate high-quality replacements. In this sub-section, we will first analyze the limitations of graph kernel-based methods and then corroborate how the proposed \genius ~addresses them.

\noindent\textbf{Effectiveness.}
As the number of node features increases, each feature dimension no longer represents a general class (i.e., coarse-grained features) where little correlation exists among different dimensions. Instead of being limited by a fixed set of genres, nodes with more specialized descriptions are becoming dominant these days, which remarkably increases the difficulty for graph kernel-based methods to capture the potential correlations among various features.\\
To be specific, given two labelled graphs $\mathbf{G}_i = \{ \mathbf{A}_i, \mathbf{L}_i \}, i = 1,2,$
\cite{randomwalk_nips,randomwalk_tong,Liangyue2015} is computed as:

\begin{equation}
    \mathrm{Ker}(\mathbf{G}_1, \mathbf{G}_2) = \mathbf{y}(\mathbf{I} - a\mathbf{A}_\times)^{-1} \mathbf{L}_\times \mathbf{x},
\label{kernel}
\end{equation}
where $\mathbf{A}_\times = \mathbf{L}_\times (\mathbf{A}_1 \otimes \mathbf{A}_2)$, $\otimes$ representing the Kronecker product of two matrices, $\mathbf{x}$ and $\mathbf{y}$ represent starting and stopping probability vectors for random walk, which are usually defined as uniform, and $a$ is a decay factor. $\mathbf{L}_\times$ is a diagonal matrix calculated as:
\begin{equation}
\mathbf{L}_\times = \sum_{i=1}^d \mathrm{diag}(\mathbf{L}_1(:,i)) \otimes \mathrm{diag}(\mathbf{L}_2(:,i)),
\label{Lx}
\end{equation}
where $\mathrm{diag}$ represents the diagonalization operation. It represents matches of labels among nodes

Random walk graph kernel bases on paths. The $\mathbf{L}_\times$ term in the expression eliminates paths between nodes where labels are different, regardless of the extent they vary. Most importantly, random graph kernel fails to recognize correlation among different features. Although some previous study includes skill pair matrix in the algorithm \cite{Zhaoheng2021}, the essence of those calculations is still a rough match of skill vectors. For instance,
\emph{pattern recognition}, \emph{feature extraction} and \emph{security} are three representative node features in DBLP dataset. When it comes to team member substitution, \emph{pattern recognition} and \emph{feature extraction} have strong correlation with one another. However, by the nature of Kronecker product, an author with publication merely in \emph{pattern recognition} is simply viewed as unmatched with another author with publication only in \emph{feature extraction}. This does not only confine potential applications in subteam member substitutions, but also gives rise to inaccurate replacement since \emph{pattern recognition} is considered to be as closely correlated with \emph{security} as \emph{feature extraction}.

\textsc{Genius}, on the other hand, is capable of handling graphs with massive node features. By including sample teams in training, the model learns to identify correlations between node features in teams study and make optimal replacement choices.

\noindent\textbf{Efficiency.}
Given $d$ node features, the optimal time complexity of graph kernel-based methods is $O(d^2)$ \cite{Zhaoheng2021} while the time complexity of \genius~ is $O(d)$. To be specific, the number of parameters to train and the dimension of vector multiplication in \emph{subteam recommender} are both $O(d)$ while the remaining calculations are not affected by the variance of $d$.
Besides, $O(n)$ cases can be passed into \emph{subteam recommender} with the entire dataset encoded by \textit{team network encoder} only once; therefore the amortized time complexity for \textsc{Genius} is lower.

To briefly conclude, we provide a theoretical proof that as the number of node features increases (i.e., finer grained features), \genius ~can present a more stable performance in terms of efficiency compared to graph kernel-based methods. We also demonstrate our analysis through extensive empirical evaluations and the details are in Sec.~\ref{sec:exp}

\section{Experiments}\label{sec:exp}
In this section, we perform empirical evaluations to demonstrate the effectiveness and efficiency of \textsc{Genius}. Specifically, we aim to answer the following research questions:

\begin{itemize}
    \item \textbf{RQ1.} How effective is the proposed framework \textsc{Genius} in generating replacement of unavailble subteams? 
    How does performance change as the size of the skill set (i.e., number of graph node features) varies?
    \item \textbf{RQ2.} How much will the performance of \textsc{Genius} change by providing different percentages of subteam members to be replaced?
    \item \textbf{RQ3.} How efficient is the proposed framework for \emph{subteam replacement}? 
\end{itemize}
\subsection{\textbf{Experimental Setup}}\label{subsec:exp_setup}

\noindent\textbf{Evaluation Datasets.}
\begin{table}[!t]
	\centering
	  \caption{Statistics of evaluation datasets.}
	{\begin{tabular}{lccc}
    \toprule
    \textbf{Datasets} & DBLP & IMDB\\
    \midrule
    \# nodes & $26,351$ & $13,472$ \\
    \# edges & $54,044$ & $665,166$\\
    \# features & $7,551$ & $2,040$\\
    \# teams & $7,052$ & $818$\\
    \bottomrule
	\end{tabular}
	}
  
	\label{tab:data_stats}
\end{table}
In the experiment, we adopt two real-world datasets widely used in previous research\cite{Liangyue2015,Zhaoheng2021} but with around $\times 150$ more abundant node features. Table \ref{tab:data_stats} summarizes the statistics of each dataset. Detailed descriptions are as follows:

\begin{itemize}
    \item \textbf{DBLP}\footnote{http://arnetminer.org/citation} provides computer science bibliographic information. We build a graph where nodes represent authors, edges represent co-authorship and node features are each author's paper keywords. Teams refer to the authors of each paper.
    
    \item \textbf{IMDB}\footnote{https://grouplens.org/datasets/hetrec-2011/} contains statistics of actors/actresses and movies where nodes represent actors/actresses, edges represent the number of movies where actors/actresses co-starred and node features are tags of movies in which actors/actresses played. Teams are represented by crews of actors/actresses in a movie. 
    
\end{itemize}
\smallskip
\noindent\textbf{Comparison Methods.} We compare \textsc{Genius} with two categories of baseline methods for solving \textit{subteam replacement}, including (1) \emph{graph kernel}-based method, which prior investigations take as optimal
solution, and (2) \emph{supervised encoder}-based method, which the capability of \textsc{Genius} to learn team- and member- level representations is compared to. 
\begin{itemize}
    \item \textbf{Graph Kernel}~\cite{Liangyue2017,Zhaoheng2021} based method applies random walk graph kernel to find the optimal result. The entire social network is searched to obtain all possible combinations of size $\lvert \mathcal{R} \rvert$ as new subteam candidates and the optimal solution is the subteam that produces the highest graph kernels w.r.t. original team.
    \item \textbf{Supervised Encoder}~\cite{DBLP} based method is of similar GNN-backboned structure as \emph{team network encoder} in \textsc{Genius} but trained by graph node labels. 
    For DBLP, label for each node represents the research field where the author has maximal publications; for IMDB, label for each node represents the movie genre where the actor stars most \cite{Liangyue2015,Zhaoheng2021}.
    \emph{Supervised encoder} generates node embeddings along with label assignments. 
    The entire network is searched to obtain all possible combinations of size $\lvert \mathcal{R} \rvert$ as new subteam candidates and the optimal solution is the subteam that produces subteam embedding closest to the remaining subteam embedding (using the same subteam embedding function $g$ as \textsc{Genius}).
\end{itemize}
\smallskip
\noindent\textbf{Training Methods.} For each dataset, $60\%$ of the teams are selected as training samples, $20\%$ are for validation while the remaining $20\%$ (labelled as $\mathcal{Z}, \mathcal{Z} \subset \mathcal{G}$) are for testing as (both for \textsc{Genius} and for baselines). For balancing parameters in Eq.~\eqref{eqn:total_loss}, $b_1, b_2, b_3$ are set as $1, 100, 1$ respectively for DBLP and $100, 100, 10$ respectively for IMDB.\\

\noindent\textbf{Evaluation Metrics.}
For quantitative analysis, there is no sole golden criteria to evaluate the effectiveness of the results. Thus, we study the disparity between the optimized new teams and original teams based on multiple graph similarity metrics to make comparisons more convincing. 
In this paper, the following metrics, all applied in the versions where both structural properties and feature information are considered, are applied to conduct a comprehensive evaluation on the performance of different methods.\\
\begin{itemize}
    \item \textbf{Graph Edit Distance (GED)} \cite{GED} is defined as the minimum cost of editing path to transform one graph to another. For two identical graphs, the Graph Edit Distance (GED) is zero. Specifically, for each testing team $t_0 \in \mathcal{Z}$, $\mathrm{GED}(\cdot)$ measures the graph edit distance based disparity between the original team $t_0$ and the new team $t_1$ \cite{networkx}. 
    Graph edit distance for each method in each dataset is the average of all testing samples $t_0 \in \mathcal{Z}$.
    \item \textbf{Shortest Path Graph Kernel} \cite{ShortestPath} is a path-based graph kernel where graphs are decomposed into shortest paths and compared in pairs by length and end-point labels (i.e., node features) \cite{graphkit-learn,Grakel}. Specifically, for each testing team  $t_0 \in \mathcal{Z}$, we calculate the disparity $\mathrm{D}_1(t_0,t_1)$ between the original team $t_0$ and the new team $t_1$ based on shortest path graph kernel function $\mathrm{SP}(\cdot)$.
    \begin{align*}
        \mathrm{D}_1(t_0,t_1) =  \mathrm{abs}(\mathrm{SP}(t_0, t_1),\mathrm{SP}(t_0, t_0))/\mathrm{SP}(t_0, t_0)
    \end{align*}
        where $\mathrm{abs}(a,b)$ represents the absolute difference of $a$ and $b$.
The disparity based on shortest path graph kernel for each method in each dataset is the average of all testing samples $t_0 \in \mathcal{Z}$.
    \item \textbf{Marginalized Graph Kernel} ~\cite{Marginalized,MarginalizedExtension} is a walk-based kernel that takes the inner product of walks on the two graphs. \cite{graphkit-learn} Specifically, for each testing team  $t_0 \in \mathcal{Z}$, we calculate the disparity $\mathrm{D}_2(t_0,t_1)$ between the original team $t_0$ and new team $t_1$ based on marginalized graph kernel function $\mathrm{M}(\cdot)$.
\begin{align*}
        \mathrm{D}_2(t_0,t_1) =  \mathrm{abs}(\mathrm{M}(t_0, t_1),\mathrm{M}(t_0, t_0))/\mathrm{M}(t_0, t_0)
\end{align*}
        where $\mathrm{abs}(a,b)$ represents the absolute difference of $a$ and $b$.
The disparity based on marginalized graph kernel for each method in each dataset is the average of all testing samples $t_0 \in \mathcal{Z}$.
\end{itemize}

Besides quantitative analysis, we also include qualitative analysis to further demonstrate that \textsc{Genius} generates promising results in real-world applications. For qualitative analysis, we perform case studies both on the plausibility of cluster assignments and final replacement results.

All experiments are run on a Tesla V100 GPU.
\subsection{\textbf{Effectiveness Results(RQ1)}}
\noindent\textbf{Quantitative Analysis}

\begin{figure*}[t!]
\centering
    \begin{subfigure}[t]{0.32\linewidth}
           \centering
           \includegraphics[width=\linewidth]{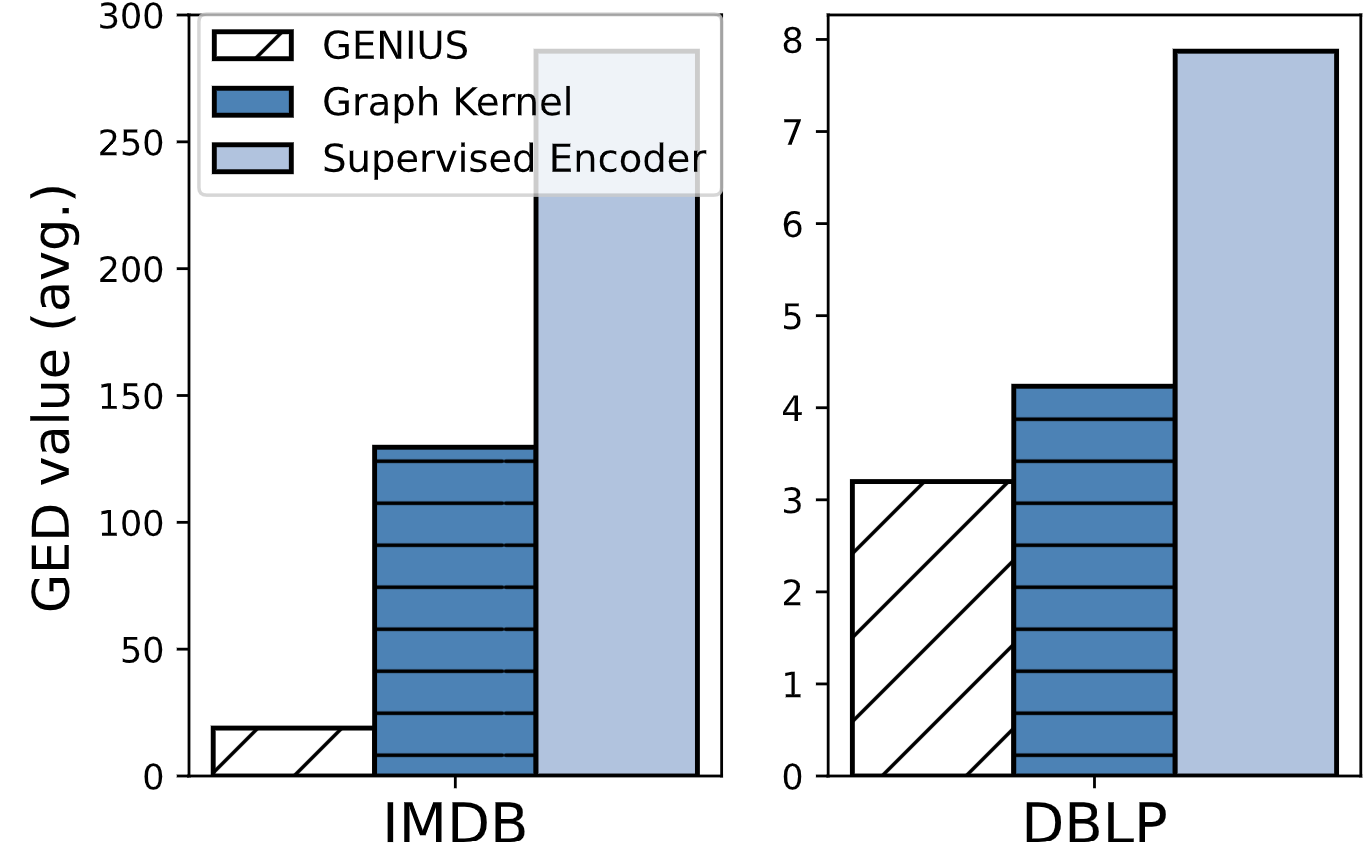}
            \captionsetup{font=scriptsize}
            \caption{Disparity based on graph edit distance (GED).}
            \label{overall comparison:GED}
    \end{subfigure}
    \begin{subfigure}[t]{0.32\linewidth}
            \centering
            \includegraphics[width=\linewidth]{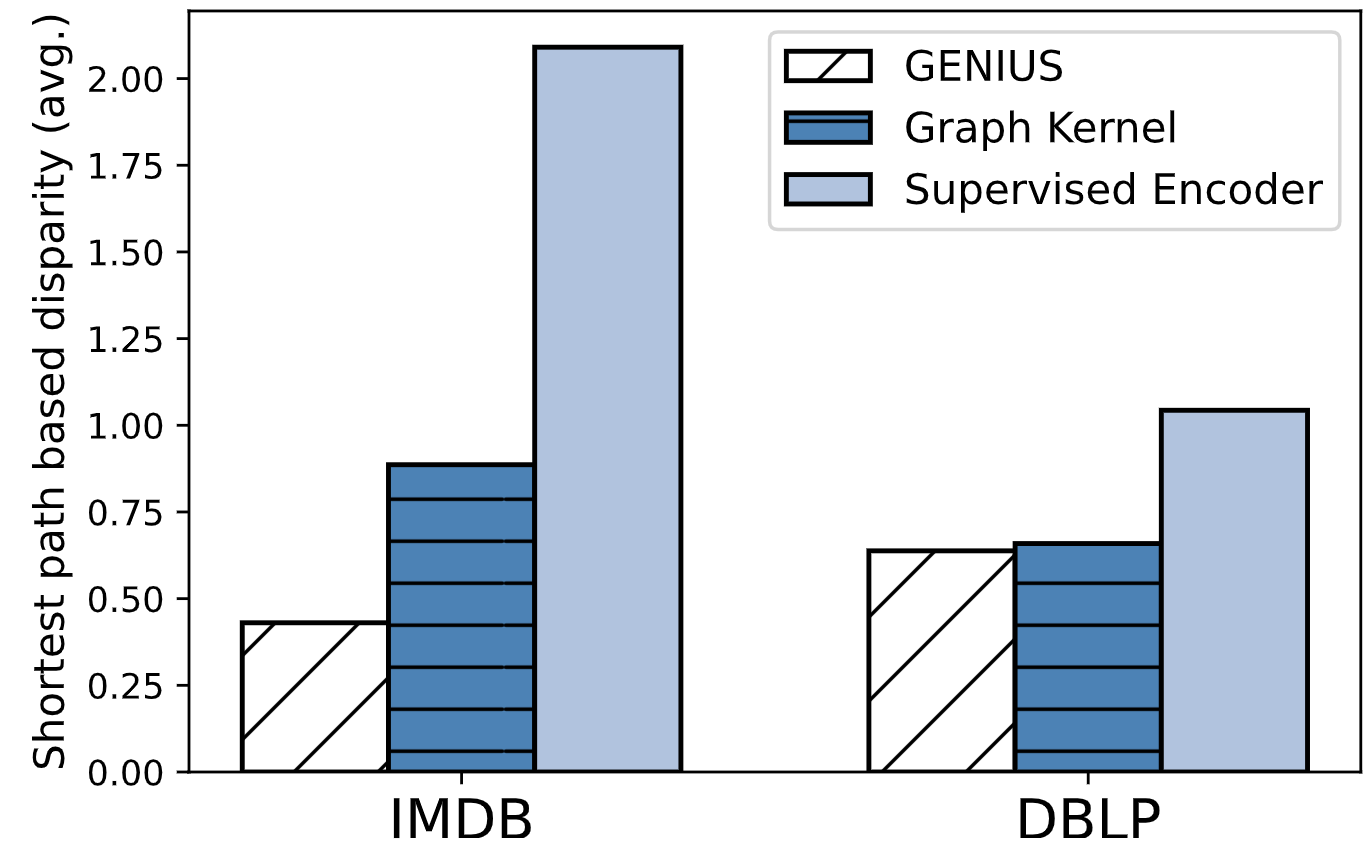}
            \captionsetup{font=scriptsize}
            \caption{Disparity based on shortest path graph kernel.}
            \label{fig:b}
    \end{subfigure}
    \begin{subfigure}[t]{0.32\linewidth}
            \centering
            \includegraphics[width=\linewidth]{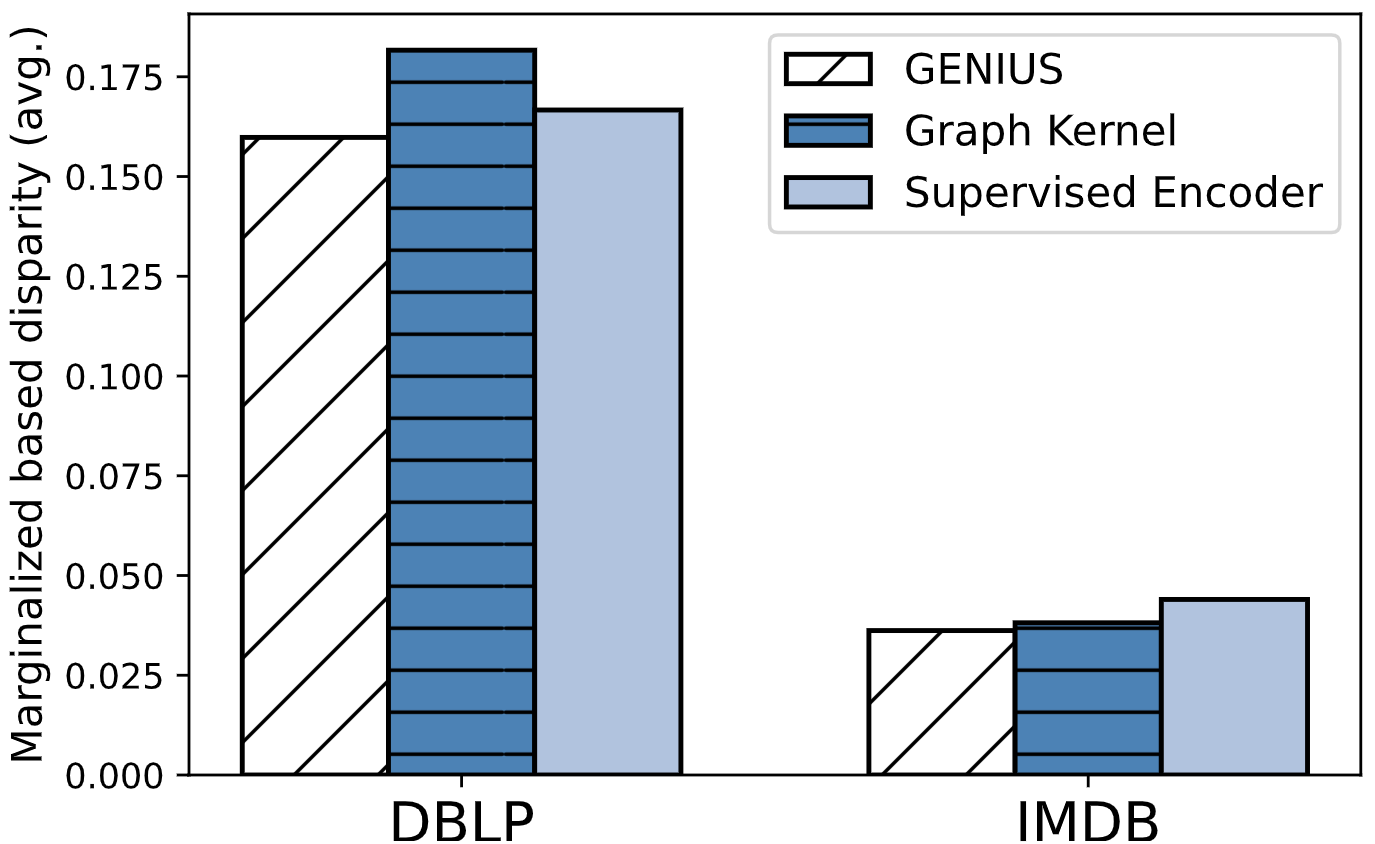}
            \captionsetup{font=scriptsize}
           \caption{Disparity based on marginalized graph kernel.}
            \label{fig:b}
    \end{subfigure}
    \caption{Performance evaluation w.r.t. different datasets.}
    \label{overall comparison}
\end{figure*}

\emph{Overall Comparison.} We first conduct experiments to examine the general performance of \textsc{Genius} and compare the results w.r.t. baselines. For each original team $t_0 \in \mathcal{Z}$, we randomly select different percentages of members as the subteam to be replaced. Both \textsc{Genius} and the two baselines are tested with the same set of test cases. The results are visualized in Figure ~\ref{overall comparison}. 
Accordingly, we have the following observations, including: \textbf{(1)} Based on various graph similarity metrics and across different datasets, \textsc{Genius} generates results with significantly lower graph disparity w.r.t. original teams: disparity based on marginalized graph kernel for \textsc{Genius} in DBLP drops by $15 \%$ from that for \emph{graph kernel}-based method; disparity based on shortest path graph kernel for \textsc{Genius} in IMDB is only $1/5$ of that for \emph{supervised encoder}-based method; GED based graph disparity for \textsc{Genius} in IMDB is only $1/15$ of that for \emph{supervised encoder}-based method and $1/7$ of that for \emph{graph kernel}-based method. This indicates that compared to baselines, \textsc{Genius} is more capable of learning effective team-level collaboration structure and skill representations. \textbf{(2)} The graph disparity of baseline models increases drastically when the average team size increases: for instance, on IMDB (average team size is around 17), both baselines generate results with average GED rising to around $\times 100$ more than that of DBLP (average team size is around 3) while \textsc{Genius} has a steady performance (Figure \ref{overall comparison:GED}). From this, we can view that \textsc{Genius} generates more stable team representations than baselines. \textbf{(3)} \emph{Supervised encoder}-based method has the worst performance of the three, confirming our previous assertion that training with labels is not appropriate for team-related tasks. 
According to overall comparisons, \textsc{Genius} outperforms baselines to a great extent.

\begin{figure*}[t!]
\centering
    \begin{subfigure}[t]{0.32\linewidth}
           \centering
           \includegraphics[width=\linewidth]{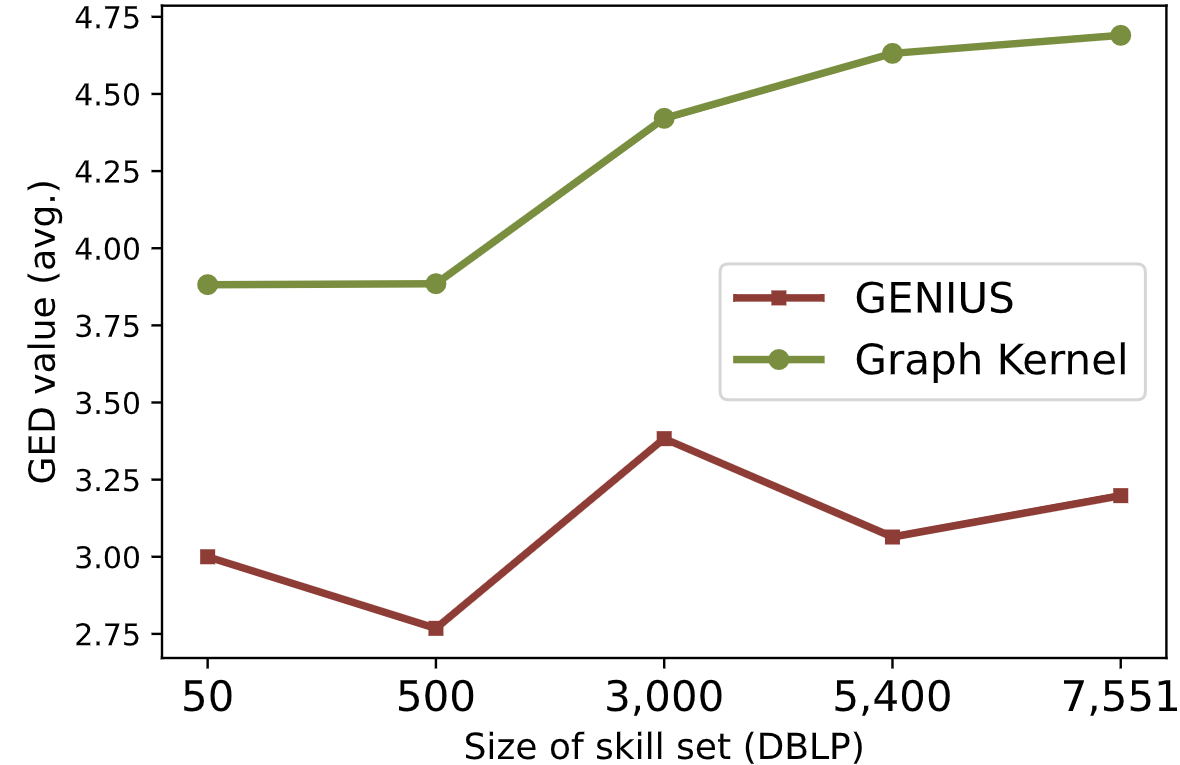}
            \captionsetup{font=scriptsize}
            \caption{Disparity based on graph edit distance (GED).}
            \label{fig:Case1}
    \end{subfigure}
    \begin{subfigure}[t]{0.32\linewidth}
            \centering
            \includegraphics[width=\linewidth]{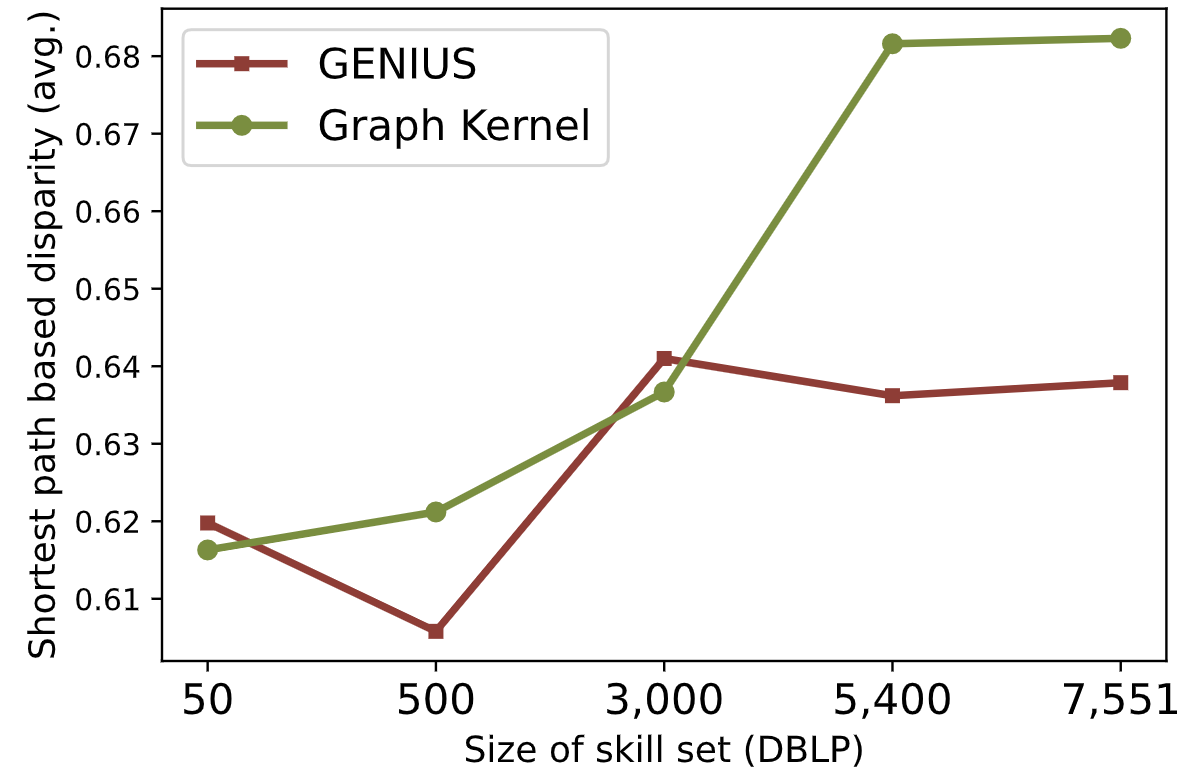}
            \captionsetup{font=scriptsize}
            \caption{Disparity based on shortest path graph kernel.}
            \label{fig:b}
    \end{subfigure}
    \begin{subfigure}[t]{0.32\linewidth}
            \centering
            \includegraphics[width=\linewidth]{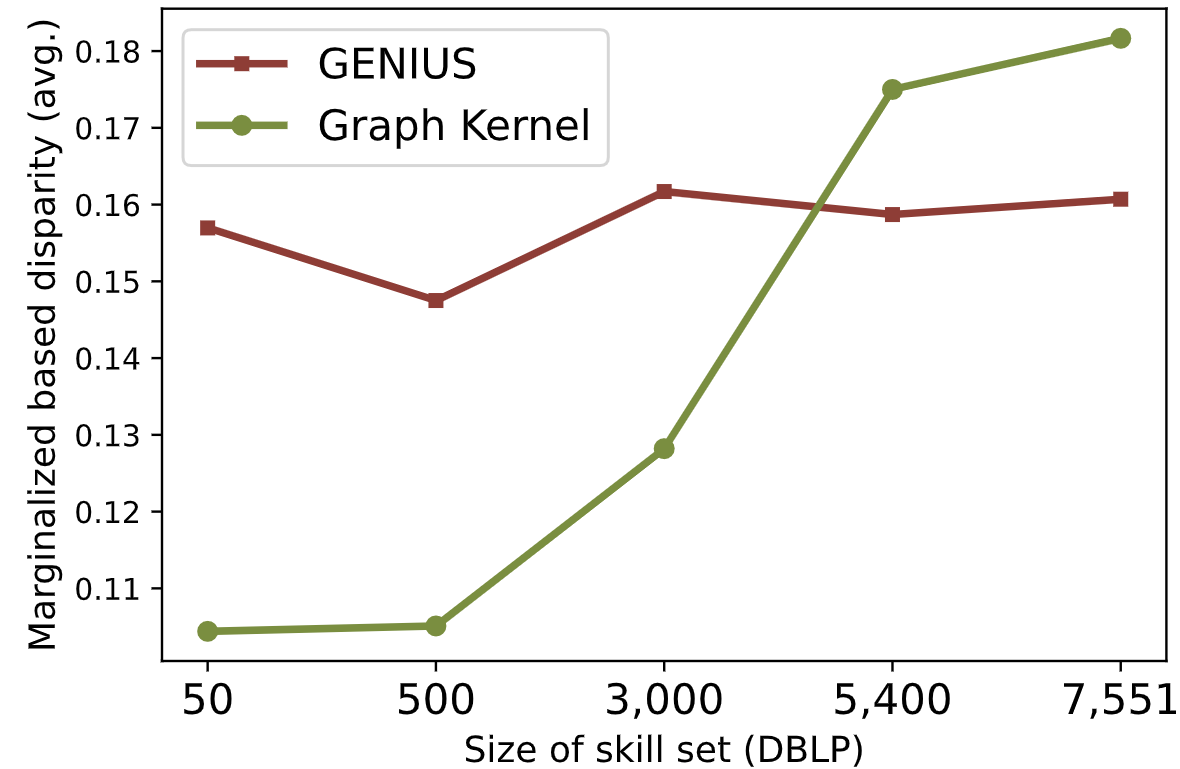}
            \captionsetup{font=scriptsize}
            \caption{Disparity based on marginalized graph kernel.}
            \label{fig:b}
    \end{subfigure}
    \caption{Performance variation w.r.t. size of skill set (DBLP).
    }
    \label{features}
\end{figure*}
\emph{Performance as the size of skill set varies.}
This part addresses our discussion on the limitation of graph kernels as the skill set scales up in section ~\ref{sec:analysis}.
We pick DBLP as the evaluation dataset because it allows a wider range of node feature variation. The same experiment settings are applied except that different numbers of node features are randomly selected from the original dataset, ranging from $50(1\%)$ to $7,551(100\%)$. Both \textsc{Genius} and \emph{graph kernel}-baseline are tested with the same set of skills and test cases. Figure ~\ref{features} visualizes the results where as the size of skill set increases, the disparity between the optimized new teams and original teams increases drastically for \emph{graph kernel}-based approach while \textsc{Genius} has a stable performance. The results confirm our previous statement that the performance of \emph{graph kernel}-based method is sabotaged by its inability to recognize potential correlations between various skills.

\noindent\textbf{Qualitative Analysis}

\emph{Cluster Assignment}. We extract the cluster where \emph{Jiawei Han} belongs and it turns out that \emph{Philip S. Yu}, \emph{Yizhou Sun}, \emph{Xifeng Yan} also belong to the same cluster. This is consistent with intuitions because all of them share very similar research interests and connections with \emph{Han}. If \emph{Han} becomes unavailable in a team, the most appropriate candidates are included within the cluster that he belongs to, confirming the effectiveness of cluster assignments generated by \emph{team network encoder} and the within-cluster search performed by \emph{subteam recommender}.

\emph{Replacement Results}. In a team of $7$, we assume a subteam of size $2$, composed of \emph{Vipin Kumar} and \emph{Huzefa Rangwala}, becomes unavailable. \textsc{Genius} generates \emph{Rakesh Agrawal} as the optimal result. This is plausible because \emph{Rakesh Agrawal} has research interests in data mining and experience that highly match with \emph{Kumar} and \emph{Rangwala}, meaning that the new team is able to perform similar tasks. The newly-formed team of $6$ is also expected to cope well because \emph{Rakesh Agrawal} has connections with the remaining $5$ members.

\begin{figure*}[t!]
\centering
    \begin{subfigure}[t]{0.32\linewidth}
           \centering
           \includegraphics[width=\linewidth]{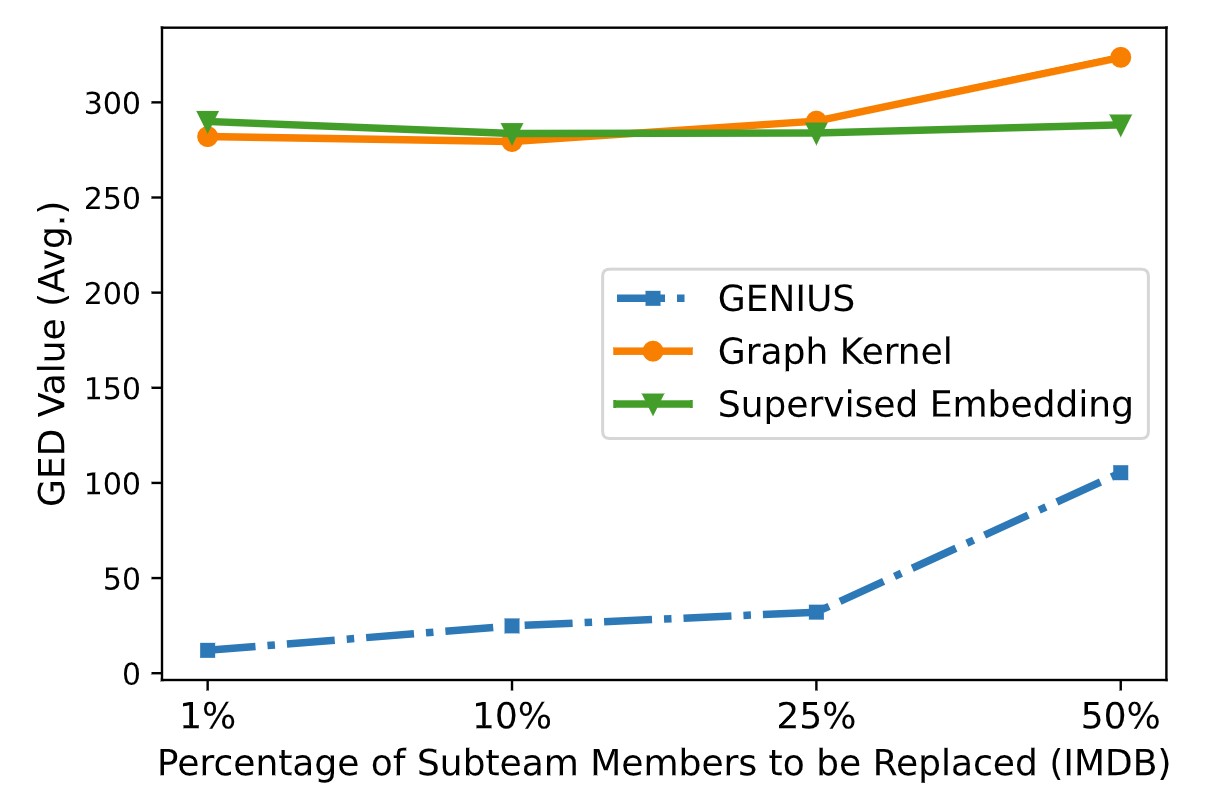}
            \captionsetup{font=scriptsize}
            \caption{Disparity based on graph edit distance (GED).}
            \label{fig:Case1}
    \end{subfigure}
    \begin{subfigure}[t]{0.32\linewidth}
            \centering
            \includegraphics[width=\linewidth]{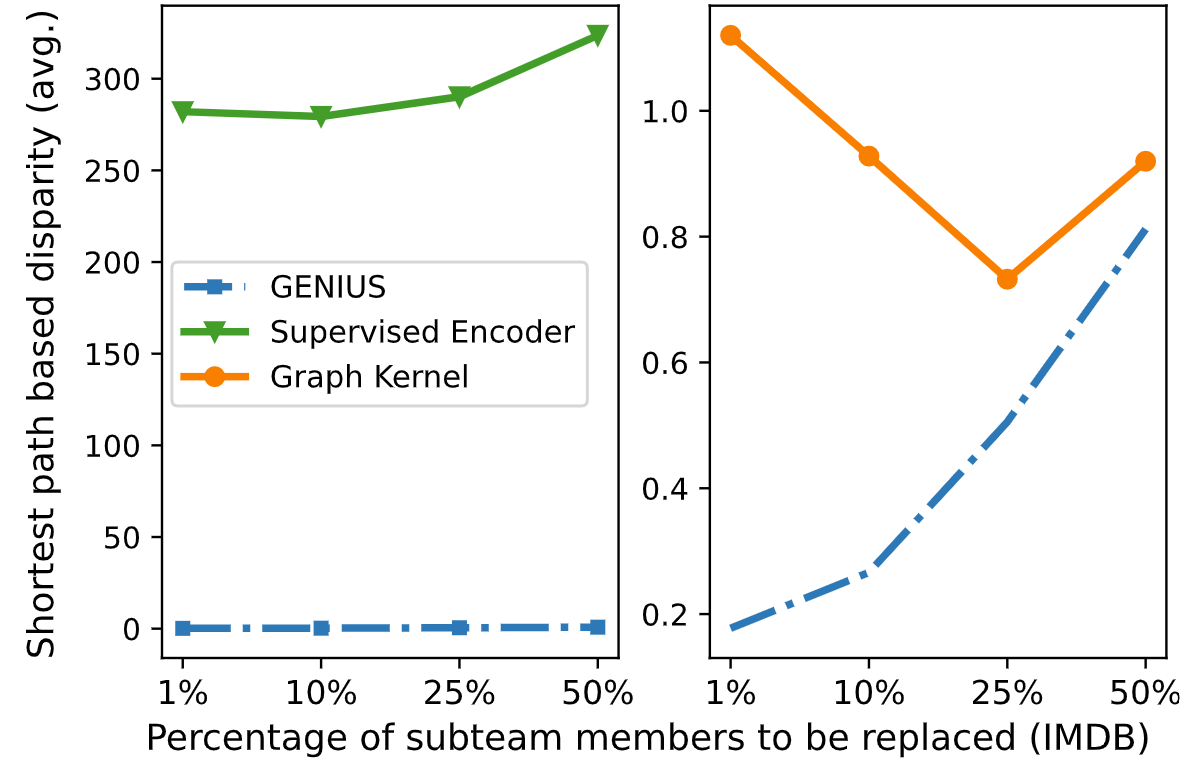}
            \captionsetup{font=scriptsize}
            \caption{Disparity based on shortest path graph kernel.}
            \label{fig:b}
    \end{subfigure}
    \begin{subfigure}[t]{0.32\linewidth}
            \centering
            \includegraphics[width=\linewidth]{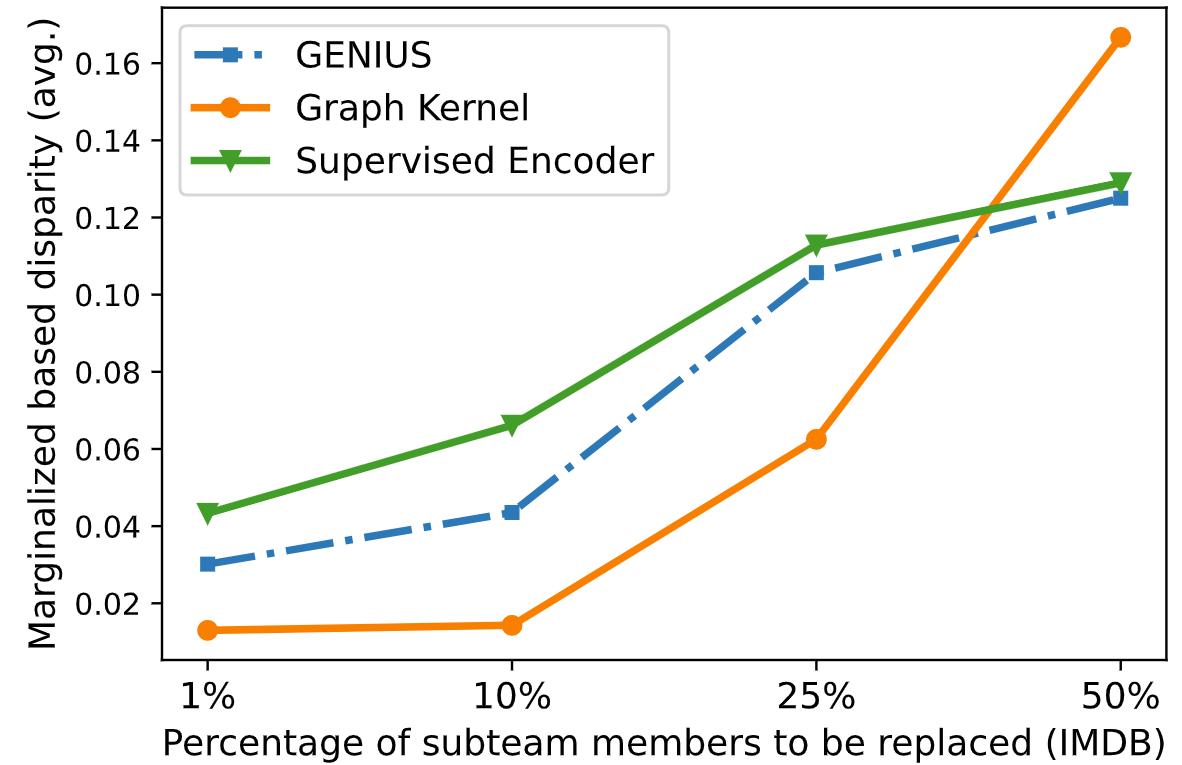}
            \captionsetup{font=scriptsize}
            \caption{Disparity based on marginalized graph kernel.}
            \label{fig:b}
    \end{subfigure}
    \caption{Performance variation w.r.t. percentage of subteam members to be replaced (IMDB).}
    \label{teamsize}
\end{figure*}
\subsection{\textbf{Hyper-Parameter Study (RQ2)}}
In this section, we mainly study how varying percentages of subteam members to be replaced affect the quality of results. We pick IMDB as the evaluation dataset because it includes teams with sizes significantly larger (around 17) than that of DBLP (around 3) and thus allows a wider range of subteam member percentage studies. \textsc{Genius} and two baselines are tested with the same original teams and subteams to be replaced. The resulting graph disparity is visualized in Figure~\ref{teamsize}. We can observe that as the percentage of subteams to be replaced increases, the graph disparity of all three methods increases, which is consistent with our intuition because fewer percentage of new team members are identical to that of the original team. However, we can see that regardless of the increases in graph disparity, what we observe in \emph{Overall Comparison} stands for every subteam percentage: \textbf{(1)} For every percentage of subteam members to be replaced, \textsc{Genius} has graph disparity that is significantly lower than the two baselines. This indicates that regardless of team sizes, \textsc{Genius} is more capable of mining critical knowledge to construct a team than baselines. \textbf{(2)} \emph{Supervised encoder}-based method remains to have the highest graph disparity regardless of the percentages of subteam members to be replaced.
\subsection{\textbf{Efficiency Results (RQ3)}}\label{subsec:efficiency}
For models that require training (\textsc{Genius} and \emph{supervised encoder}-based method), we take total running time as the sum of training time (total training time for $1,000$ epochs divided by the number of test samples, i.e., $\lvert \mathcal{Z} \rvert$) and inference time. For the method that does not require training (\emph{graph kernel}-based method), the total running time is taken as the time required to obtain optimized solutions.
We investigate the efficiency results from the following aspects:
\begin{itemize}
    \item The overall efficiency differences between \textsc{Genius} and baselines (Table \ref{general_efficiency}).
    It is obvious that \textsc{Genius} performs much faster than the two baselines even if training time is taken into consideration due to its low amortized time complexity. Improvements brought by within-cluster search algorithm can be clearly viewed from the inference time contained in brackets: since both baselines have to iterate over the entire dataset to generate candidates while \textsc{Genius} prunes unqualified candidates by searching only within clusters, the inference time of \textsc{Genius} for an optimized solution is significantly shorter. \emph{Graph kernel}-based baseline takes the longest time to find a solution because it performs more complicated calculations on each subteam member to be replaced. Specifically, \textsc{Genius} has around $\times 600$ higher efficiency than \emph{graph kernel}-based method on DBLP and around $\times 400$ higher on IMDB; \textsc{Genius} has around $\times 3$ higher efficiency than \emph{supervised encoder}-based method on DBLP and around $\times 20$ higher efficiency on IMDB.
    \item Training time differences of \textsc{Genius} as the size of skill set varies (Table \ref{training_time}). From the results, we can observe tiny fluctuations of the average training time per epoch, i.e., $3.12 \pm 0.29$, across different numbers of node features (coefficient of variation $c_v \approx 9\%$). Thus, although variations in the number of node features give rise to different numbers of parameters to be trained, the actual training time per epoch varies within a small range.
    \item Efficiency differences of \textsc{Genius} comparing to \emph{graph kernel} baseline as the size of skill set varies (Table \ref{feature_efficiency}). We can observe that as the number of node features increases, \emph{graph kernel}-based method takes a drastically increasing amount of time to find a solution. This results from the $\mathbf{L}_\times$ term (Eq.~\eqref{Lx}) in graph kernel expression (Eq.~\eqref{kernel}) calculated by iterating over node features, giving an increase in time complexity of $O(d)$. Meanwhile, the average inference time to find a solution using \textsc{Genius} drops as the number of features increases (since the training time is close across various percentages, the total time also drops). This is because with more detailed skill information, nodes become more scattered and thus, within-cluster search generates fewer candidates. For every size of skill set, \textsc{Genius} finds a solution significantly faster than \emph{graph kernel}-based baseline.
    \item Efficiency differences as the percentage of subteam members to be replaced varies (Table \ref{portion_efficiency}). For \textsc{Genius} and \emph{supervised encoder}-based method, the training time for each method is the same across different subteam percentages because the inferences are all performed on the same encoded team network.
    As the percentage of subteam members to be replaced increases, the number of candidates increases and thus, the inference time it takes to obtain a solution increases for both \textsc{Genius} and the two baselines. Regardless of different percentages of subteam members to be replaced, \textsc{Genius} takes significantly less time to find a solution than the two baselines.
\end{itemize}
\begin{table}[!t]
	\centering
	  \caption{Average time to find a solution in seconds}
	  \begin{threeparttable}
	{\begin{tabular}{lcc}
    \toprule
    \textbf{Methods} & \textbf{DBLP}  & \textbf{IMDB}\\
    \midrule
    \textbf{\textsc{Genius}} & 2.27 (0.04) & 11.74 (0.13)\\
    \textbf{Graph Kernel} & 1335.20 & 4432.01\\
    \textbf{Supervised Embedding} & 7.32 (7.21) & 216.36 (215.87)\\
    \bottomrule
	\end{tabular}
	}
 \begin{tablenotes}
        \footnotesize
        \item[*] Numbers in brackets (if any) represent inference time
      \end{tablenotes}
    \end{threeparttable}
          \vspace{2mm}
  
	\label{general_efficiency}
\end{table}

\begin{table}[!t]
	\centering
	  \caption{Average training time for \textsc{Genius} per epoch in seconds w.r.t. size of skill set (DBLP)}
	 \resizebox{\linewidth}{!}{
	{\begin{tabular}{lccccc}
    \toprule
    \textbf{$\#$ Node Features} & $50$  & $500$ & $3,000$ & $5,400$ & $7,551$\\
    \midrule
    \textbf{Average Training Time} & 3.59 & 2.77 & 3.32 & 2.93 & 3.00\\
    \bottomrule
	\end{tabular}
	}
	}
  
	\label{training_time}
\end{table}

\begin{table}[!t]
	\centering
	  \caption{Average time to find a solution in seconds w.r.t. size of skill set (DBLP)}
	  \begin{threeparttable}
	  	 \resizebox{\linewidth}{!}{
	{\begin{tabular}{lccccc}
    \toprule
    \textbf{Methods} & $50$  & $500$ & $3,000$ & $5,400$ & $7,551$\\
    \midrule
    \textbf{\textsc{Genius}} & 2.77 (0.23) & 2.12 (0.15) & 2.41 (0.05) & 2.10 (0.03) & 2.17 (0.04)\\
    \textbf{Graph Kernel} & 12.40 & 96.10 & 570.12 & 924.34 & 1335.20\\
    \bottomrule
	\end{tabular}
	}
	}
	 \begin{tablenotes}
        \footnotesize
        \item[*] Numbers in brackets (if any) represent inference time
      \end{tablenotes}
      \vspace{2mm}
    \end{threeparttable}
  
	\label{feature_efficiency}
\end{table}

\begin{table}[!t]
	\centering
	  \caption{Average time to find a solution in seconds w.r.t. percentages of subteam members to be replaced (IMDB)}
	  \begin{threeparttable}
	\resizebox{\linewidth}{!}{
	{\begin{tabular}{lcccc}
    \toprule
    \textbf{Methods} & $1\%$  & $10\%$ & $25\%$ & $50\%$\\
    \midrule
    \textbf{\textsc{Genius}} & 11.61 (0.01) & 15.30 (3.69)  & 31.98 (20.37) & 85.41 (73.80)\\
    \textbf{Graph Kernel} & 1241.11 & 3632.30 & 5213.42 & 13456.02\\
    \textbf{Supervised Embedding} & 1.01 (0.52) & 189.40 (188.91) & 754.35(753.86) & 950.68 (950.19) \\
    \bottomrule
	\end{tabular}
	}
	}
    \begin{tablenotes}
        \footnotesize
        \item[*] Numbers in brackets (if any) represent inference time
      \end{tablenotes}
    \end{threeparttable}
	\label{portion_efficiency}
\end{table}
\section{Related Work}\label{sec:related}
In this section, we review the related work in terms of (1) team recommendations, (2) graph neural networks, and (3) graph node clustering.
\subsection{Team Recommendations}
Team recommendation problems concern with collaborations of individuals to complete a given task under certain constraints (e.g., limited communication budget). These problems take not only skill matches into account \cite{skills}, but also the connectivity among team members \cite{Lappas2009,Yin2018}. The balance in between has also been taken into serious consideration\cite{Dorn2010}. Based on these factors, contributions of individuals to the entire team have been carefully investigated \cite{Liangyue2017}. \textit{Team replacement} resolves the problem of team member replacement by applying graph kernels \cite{Liangyue2015} or reinforcement learning in a dynamic scenario~\cite{zhou2019towards}, where both skill features and collaboration structures are taken into account. With the problem extended to multiple team members to be replaced(\textit{subteam replacement}), graph kernel remained as the primary metric \cite{Zhaoheng2021}. Furthermore, existing methods ~\cite{Liangyue2015, Zhaoheng2021} produce fixed-size results, which might be problematic in a real-world scenario when new teams of smaller sizes are expected due to budget constraints. Recently, Zhou et al.~\cite{zhou2018extra} propose to interpret the results from graph kernel-based team recommendation algorithms with influence functions~\cite{zhou2019admiring,koh2017understanding,zhou2021adversarial}.
\textsc{Genius} is thus developed to resolve the limitations.
\subsection{Graph Neural Networks}
Various graph neural networks (GNN)\cite{GAT,GCN2016,GraphSAGE,GPRGNN} have been developed and widely applied in numerous fields. Within these models, node representations are passed through layers in a consecutive fashion~\cite{MessagePassing}. Each kind of them has its strengths and performs well in different tasks such as including social network analysis~\cite{GCN2016,GraphSAGE}, graph anomaly detection~\cite{ding2021few,ma2021comprehensive}. Graph convolutional networks (GCNs)~\cite{GCN2016} originated from the graph spectral theory and generate promising results by capturing information from the entire graph to encode each node. Graph Attention Networks(GAT)\cite{GAT} is able to learn the relationships of each node with its neighbors and GraphSAGE\cite{GraphSAGE} successfully deals with unseen nodes. Generalized page rank GNN (GPRGNN) \cite{GPRGNN} and Jumping Knowledge Net (JKNet) \cite{JKNet} further solves the problem of over-smoothing (i.e., nodes tend to have similar embeddings when the number of layers increases).
Regardless of various GNN structures, existing methods involve labels in the training process ~\cite{Cora,GCN2016,semi-supervised1}. \textit{Subteam replacement} calls for a model that generates more comprehensive team- and member-level representations without being constrained by labels.

\subsection{Graph Node Clustering}
Graph node clustering plays an important role in many graph pooling and coarsening techniques\cite{DMoN2020,DiffPool2018,MinCut2019}. Untrainable node clustering models have limited capability to extract important node features \cite{dhillon2007weighted}. Most trainable models are trained with ground-truth class labels \cite{DiffPool2018,SAGPool2019}. 
For models trained in an unsupervised fashion, some\cite{MinCut2019} are shown to be unable to optimize their own objective function on some real datasets\cite{DMoN2020}; others \cite{DMoN2020} have their partition validated over original labeled classes (i.e., clusters are based on class labels).
Since there is a very limited number of class labels (around 10)~\cite{DBLP,pitfalls,Cora}, nodes are not well separated (i.e., there are numerous nodes within a cluster) and thus, within-cluster search does not bring significant improvements to efficiency. Since it is challenging and unrealistic to involve numerous node labels, we adopt a novel unsupervised technique to train \textsc{Genius} and generate clusters with a small number of nodes.
\section{Conclusion}
In this paper, we investigate the challenging problem of \emph{subteam replacement}. To tackle this problem, we propose a novel framework ~\genius, which incorporates a GNN-backboned \emph{team network encoder} that learns essential team-level representations with \emph{positive team contrasting}, a training technique developed for team-related problems. To further improve the efficacy of ~\genius, members with similar skills and collaboration structures are clustered together, by which a \emph{subteam recommender} performs a within-cluster search that prunes unqualified candidates. Through extensive experiments, we demonstrate the superiority of \genius\ over existing methods. Our model can be further extended to multiple problems, including subgraph matching in large team social network datasets, subgraph similarities, etc.
\bibliographystyle{IEEEtrans}
\bibliography{reference}

\end{document}